\newtheorem{theorem}{\textbf{\textsc{Theorem}}}
\begin{document}
\title{Offloading Energy Efficiency with Delay Constraint for Cooperative Mobile Edge Computing Networks}
\author{\IEEEauthorblockN{Thai T. Vu, Nguyen Van Huynh, Dinh Thai Hoang, Diep N. Nguyen, and Eryk Dutkiewicz\\}
School of Electrical and Data Engineering, University of Technology Sydney, Australia 	\vspace{-5mm}}	

\maketitle

\thispagestyle{empty}
\begin{abstract}
We propose a novel edge computing network architecture that enables edge nodes to cooperate in sharing computing and radio resources to minimize the total energy consumption of mobile users while meeting their delay requirements. To find the optimal task offloading decisions for mobile users, we first formulate the joint task offloading and resource allocation optimization problem as a mixed integer non-linear programming (MINLP). The optimization involves both binary (offloading decisions) and real variables (resource allocations), making it an NP-hard and computational intractable problem. To circumvent, we relax the binary decision variables to transform the MINLP to a relaxed optimization problem with real variables. After proving that the relaxed problem is a convex one, we propose two solutions namely ROP and IBBA. ROP is adopted from the interior point method and IBBA is developed from the branch and bound algorithm. Through the numerical results, we show that our proposed approaches allow minimizing the total energy consumption and meet all delay requirements for mobile users.
\end{abstract}

{\it Keywords-} Task offloading, mobile edge computing, resource allocation, MINLP, and branch-and-bound algorithm. 
 
\section{Introduction}
\label{sec:Introduction}

The development of mobile applications and Internet-of-Things (IoT) networks has brought a great deal of benefits for human lives, but it also faces many challenges. In particular, mobile and IoT applications have been developed recently often require computations with high complexity, e.g., 3D rendering and image processing, and/or low delay constraints, e.g., interactive games and online object recognitions. However, mobile and IoT devices are usually limited by computing resources, battery life, and network connections, and thus advanced applications may not be able to implement on these devices in practice. Thus, mobile edge computing has been introduced as an effective solution to address this problem. 

Mobile Edge Computing (MEC) is an emerging network architecture that ``move" the cloud computing capabilities closer to the mobile users~\cite{Mach2017Mobile}. Specifically, in an MEC network, powerful computing devices, e.g., servers, are deployed at the edges of the mobile network to support hardware resource-constrained devices, e.g., mobile and IoT devices, to perform high complexity computational tasks. The deployment of MEC networks can save energy consumption, increase operation time, and reduce performance delay for smart devices through utilizing powerful resources of the edge nodes. Furthermore, this can reduce operation costs for mobile network operators up to 67\% by reducing the total throughput and peak backhaul bandwidth consumption~\cite{Wang2017Asurvey}. As a result, technical standards for MEC are being developed by the European Telecommunications Standards Institute to promote the development of MEC in future mobile networks~\cite{ETSIWhitePaper2015}. 

However, an MEC node does not possess abundant computing resource as that of the public cloud, e.g., Amazon Web Services and Microsoft Azure. Additionally, although computation offloading demand from mobile users is usually high, not all computational tasks benefit by being offloaded to the edge node. Some tasks even consume more energy when being offloaded than processed locally due to the communication overhead, i.e., transmit requests and receive results. Consequently, joint task offloading and resource allocation to minimize energy consumption for mobile devices under the edge's resource constraints and delay requirements is the most important challenge in MEC networks~\cite{Mach2017Mobile}. 

In~\cite{Chang2017Energy}, the authors study an energy efficient computation offloading scheme in a multi-user MEC system. In particular, the authors first formulate an energy consumption optimization problem with explicit consideration of delay performance. Through analyzing the relationship between mobile users' demands and edge computing node's capacity, the authors then can derive the optimal offloading probability and transmit power for mobile users. Aiming to minimize the overall cost of energy, computation, and delay for all users, the authors in~\cite{Chen2017Joint} introduce a joint offloading and resource allocation for computation and communication in an MEC network. Due to the NP-hard problem, the authors proposed a three-step algorithm including semidefinite relaxation, alternating optimization, and sequential turning. In addition, there are some other research works in the literature studying different approaches for jointly energy efficiency and delay management in MEC networks. For example, the authors in~\cite{Chen2016Efficient} present a computation offloading game model to address the distributed computation offloading decision problem for mobile users, and the authors in~\cite{Cardellini2016game} introduce a computation offloading hierarchical model in which a task can be sent to an MEC node or a cloud server.

Different from all aforementioned work, in this paper, we study a cooperative MEC network in which edge nodes are deployed in the same area to support high complexity computation tasks of the mobile users. The edge nodes have different radio and computing resources, meanwhile mobile users have distinct computation tasks with various delay requirements. To minimize the total energy consumption for mobile users in the network and meet all tasks' delay requirements, we first formulate the joint task offloading and resource allocation optimization problem for all mobile users and edge nodes. Since the optimization problem is a mixed integral non-linear programming (MINLP) which is NP-hard and intractable to solve, we introduce a relaxing solution which converts binary decision variables to real values. We then prove that the relaxed optimization problem is a convex problem which can be solved by some effective methods, e.g., the interior point method (IPM). Although the IPM can find the optimal solution for relaxed problem, the obtained decision variables are real numbers which may not be practical in implementing in the MEC network. In addition, when converting decision variables to real values, the complexity of optimization problem becomes higher, which is inefficiency to implement in MEC networks, especially when the number of variables is large. Therefore, in this paper, we introduce IBBA, an improvement of branch and bound algorithm to address the MINLP. The proposed IBBA allows not only finding optimal binary variables for offloading decisions, but also utilizing the characteristics of binary variables to reduce the complexity in finding the optimal solution. The extensive numerical results are then performed to demonstrate the efficiency of proposed solutions in minimizing the total energy consumption for mobile devices and meeting delay requirements for offloading tasks.

\section{System Model}
\label{sec:sysmodel}

\subsection{Network Model}

We consider an MEC with $N$ mobile users, $M$ cooperative edge nodes, and one cloud server as shown in Fig.~\ref{fig:System-Model}. The set of mobile users and MEC nodes in the network are denoted by $\mathbb{N}=\{1,2,\ldots,N\}$ and $\mathbb{M}=\{1,2,\ldots,M\}$, respectively. Each mobile user has computing tasks which can be processed locally or offloaded to MEC nodes to execute. If a task is decided to be executed at an MEC node, the mobile user will send the requested task to the target edge node. After the task is performed at the edge node, the result will be sent back to the user. Note that if the edge node does not have sufficient computing resources or it cannot meet the delay constraint of the task, the edge node will send the task to the cloud server for processing. 

\begin{figure}[htb!]
	\centering
	\includegraphics[scale=0.55]{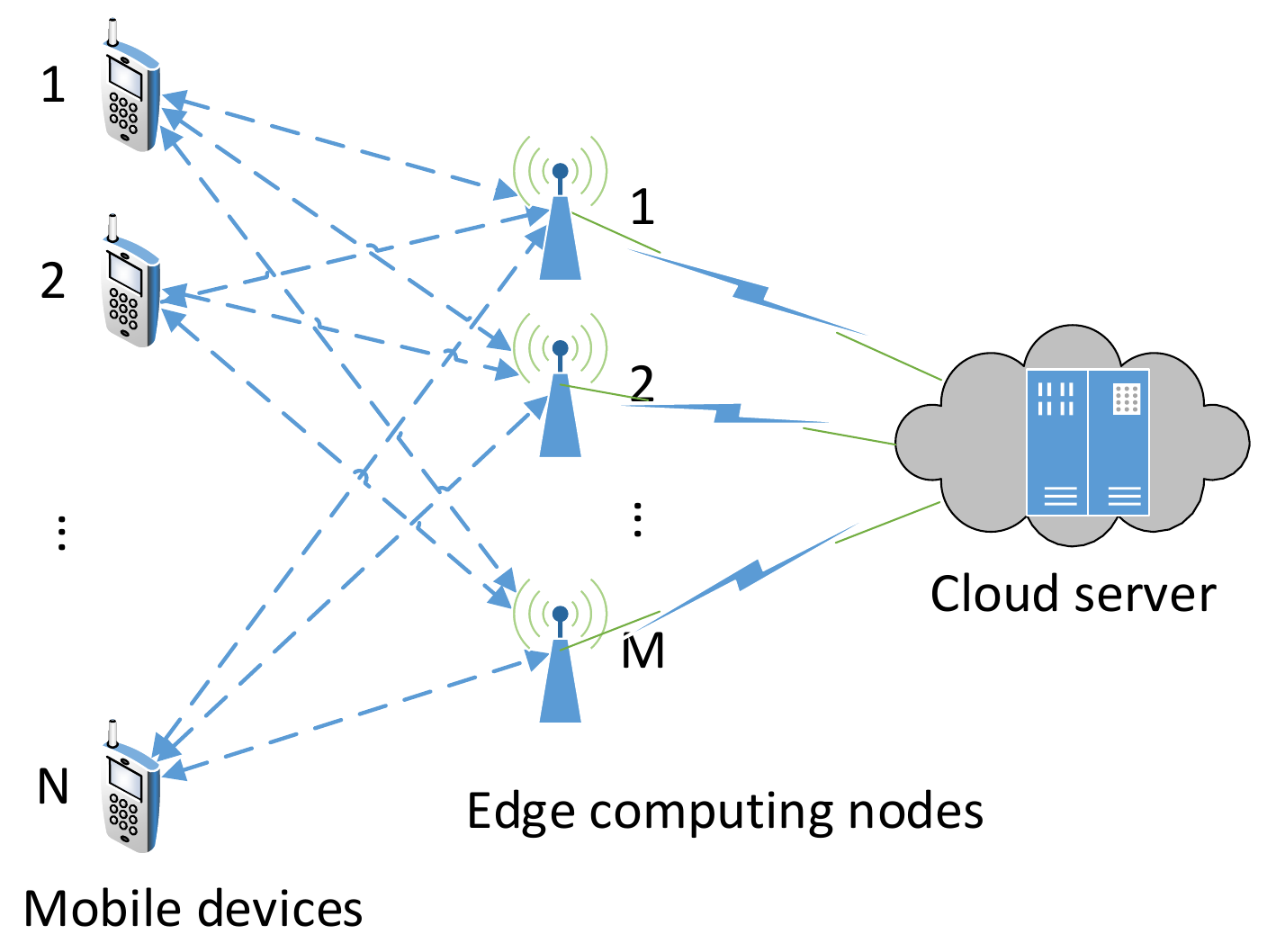}
	\caption{Cooperative mobile edge computing network.}
	\label{fig:System-Model}
\end{figure}

\subsection{Mobile Devices}

At each time slot, mobile user $i$ has a task which needs to be executed. The task is defined by a tuple $I_{i}\left(D_{i}^{i},D_{i}^{o},C_{i},T_{i}^{r}\right)$, in which $D_{i}^{i}$ is the input data size (including input data and execution code), $D_{i}^{o}$ is the output data size, $C_{i}$ is the number of CPU cycles that is required to execute the task, and $T_{i}^{r}$ is the delay requirement of task $I_{i}$. In this paper, we set $T_{i}^{r}$ as the maximum delay requirement of the task. Each mobile device has a processing rate defined by $f_{i}^{l}$ which expresses the hardware capability of the mobile device.

\subsection{MEC Nodes}

We assume that each MEC node $j$ has a resource capability denoted by a tuple $(R_{j}^{u},R_{j}^{d},F_{j}^{f})$ in which $R_{j}^{u}$, $R_{j}^{d}$ and $F_{j}^{f}$ are total uplink rate, total downlink rate, and CPU cycle rate, respectively. These resources can be allocated partially to perform mobile users' offloading tasks.

\subsection{Cloud Server}

All MEC nodes are assumed to be able to connect with a public cloud server. If a task is sent to an MEC node, but the MEC node cannot perform due to the resource or delay constraint, the edge node will forward the task to the cloud server for processing. We denote the data rate between an MEC node and the cloud server as $r^{fc}$, and the the processing rate assigned to each task on the cloud server as $f^{c}$.

\section{Problem Formulation}
\label{sec:LA}

In this paper, we consider a joint offloading and resource allocation problem in which the total energy consumption of mobile devices is minimized. We denote the computation offloading decision variable for task $I_{i}$ by $x_{i}=(x_{i}^{l},x_{i}^{f},x_{i}^{c})$, in which $x_{i}^{l}$, $x_{i}^{f}$ and $x_{i}^{c}$ respectively indicates whether task $I_{i}$ is processed locally at the mobile device, an MEC node, or the cloud server. Here, the variable $x_{i}^{f} = (x_{i1}^{f},x_{i2}^{f},...,x_{iM}^{f})$ is to determine which MEC node will execute the task $I_{i}$. Similarly, the variable $x_{i}^{c} = (x_{i1}^{c},x_{i2}^{c},...,x_{iM}^{c})$ is to determine which MEC node will forward the task to the cloud server.

\subsection{Local Processing}

For the local computing approach, the offloading decision for task $I_{i}$ is defined by $x_{i}=(1,0,0)$. In this case, the consumed energy $E_{i}^{l}$ of the mobile device is proportional to the CPU cycles required for task $I_{i}$ and the expected delay $T_{i}^{l}$ is the execution time of the task. We have:
\begin{equation}
E_{i}^{l}=v_{i}C_{i} ,\quad \text{and} \quad T_{i}^{l}=\frac{C_{i}}{f_{i}^{l}},
\end{equation}
where $v_{i}$ denotes the consumed energy per CPU cycle~\cite{Wen2012Energy}.

\subsection{MEC Node Processing}

For the MEC node processing approach, the offloading decision for task $I_{i}$ is defined by $x_{i}=(0,1,0)$. If task $I_{i}$ is processed at MEC node $j$ ($x_{ij}^{f}=1$), the MEC node will allocate spectrum and computation resources for mobile device $i$, defined by a tuple $r_{ij}=(r_{ij}^{u},r_{ij}^{d},f_{ij}^{f})$, in which $r_{ij}^{u}$, $r_{ij}^{d}$ respectively are uplink rate, downlink rate for input and output transmissions, and $f_{ij}^{f}$ is CPU cycle rate for the task being processed at MEC node $j$. In this case, the energy consumption at the mobile user is for both transferring data to and receiving data from the MEC node $j$, and the delay includes time for transmitting input data, receiving output data and task processing at the MEC node.

Let $e_{ij}^{u}$ and $e_{ij}^{d}$ denote the energy consumption for transmitting and receiving a unit of data, respectively. The consumed energy of mobile device $E_{ij}^{f}$ and the delay $T_{ij}^{f}$ are given by:
\begin{equation}
E_{ij}^{f}=E_{ij}^{u}+E_{ij}^{d}, \quad \text{and} \quad T_{ij}^{f}=\frac{D_{i}^{i}}{r_{ij}^{u}}+\frac{D_{i}^{o}}{r_{ij}^{d}}+\frac{C_{i}}{f_{ij}^{f}} ,
\end{equation}
where $E_{ij}^{u}=e_{ij}^{u}D_{i}^{i}$ and $E_{ij}^{d}=e_{ij}^{d}D_{i}^{o}.$

Additionally, task $I_{i}$ can be processed at only one MEC node, and thus the consumed energy $E_{i}^{f}$ of mobile device and the delay $T_{i}^{f}$ since task $I_{i}$ is processed at MEC node $j$ is defined as follows:
\begin{equation}
\begin{aligned}
E_{i}^{f}=\sum_{j=1}^{M}x_{ij}^{f}E_{ij}^{f} , \quad \text{and} \quad T_{i}^{f}=\sum_{j=1}^{M}x_{ij}^{f}T_{ij}^{f},
\end{aligned}
\end{equation}
s.t.
\begin{equation}
\begin{aligned}
\left\{	\begin{array}{ll}
x_{i}^{f}=\sum_{j=1}^{M}x_{ij}^{f} = 1,	\\
x_{ij}^{f} \in\{0,1\}, \forall j\in \mathbb{M}.	\\
\end{array}	\right.
\end{aligned}
\label{eq:offloading_at_fog_node_constraints}
\end{equation}

\subsection{Cloud Server Processing}

For the cloud computing approach, the offloading decision for task $I_{i}$ is defined by $x_{i}=(0,0,1)$. If MEC node $j$ forwards task $I_{i}$ to the cloud server (i.e., $x_{ij}^{c}=1$), the MEC node will allocate communication resource for mobile device~$i$, defined by a tuple $r_{ij}=(r_{ij}^{u},r_{ij}^{d},f_{ij}^{f})$, in which $r_{ij}^{u}$, $r_{ij}^{d}$ are uplink rate, downlink rate for input and output transmissions, and $f_{ij}^{f}=0$. After receiving the task, the MEC node $j$ sends the input data to the cloud server for processing, then receives and sends the result back to the mobile device. In this case, the total consumed energy $E_{ij}^{c}$ at the mobile user is the same as in the case of the MEC processing, while the delay $T_{ij}^{c}$ includes the time for transmitting the input from mobile user to the MEC node, time from the MEC node to the cloud server, time for receiving the output from the cloud server to mobile user via the edge node, and task-execution time at the cloud server. These performance metrics are as follows:
\begin{equation}
E_{ij}^{c}=E_{ij}^{f}=E_{ij}^{u}+E_{ij}^{d} ,
\label{eq:cloud_energy}
\end{equation}

and
\begin{equation}
T_{ij}^{c}=\frac{D_{i}^{i}}{r_{ij}^{u}}+\frac{D_{i}^{o}}{r_{ij}^{d}}+
\frac{(D_{i}^{i}+D_{i}^{o})}{r^{fc}}+\frac{C_{i}}{f^{c}} .
\label{eq:cloud_delay}
\end{equation}

Similarly, because only one MEC node can forward task $I_{i}$ to the cloud server, the consumed energy $E_{i}^{c}$ of mobile device and the delay $T_{i}^{c}$ since task $I_{i}$ is processed at the MEC node are defined as follows:
\begin{equation}
\begin{aligned}
E_{i}^{c}=\sum_{j=1}^{M}x_{ij}^{c}E_{ij}^{c}, \quad \text{and} \quad T_{i}^{c}=\sum_{j=1}^{M}x_{ij}^{c}T_{ij}^{c},
\end{aligned}
\end{equation}
s.t.
\begin{equation}
\begin{aligned}
\left\{	\begin{array}{ll}
x_{i}^{c}=\sum_{j=1}^{M}x_{ij}^{c} = 1,	\\
x_{ij}^{c} \in\{0,1\}, \forall j\in \mathbb{M}.	\\
\end{array}	\right.
\end{aligned}
\label{eq:offloading_at_cloud_constraints}
\end{equation}

Let $E_{i}$ and $T_{i}$, respectively, be the consumed energy of mobile device and the delay when task $I_{i}$ is processed. Note that a task can be executed at either the mobile device, an MEC node, or the cloud server. Thus, we have:
\begin{eqnarray}
E_{i}=x_{i}^{l}E_{i}^{l}+x_{i}^{f}E_{i}^{f} +x_{i}^{c}E_{i}^{c} , \label{eq:energy} \\
T_{i}=x_{i}^{l}T_{i}^{l}+x_{i}^{f}T_{i}^{f} +x_{i}^{c}T_{i}^{c}, \label{eq:delay}
\end{eqnarray}
s.t.
\begin{equation}
\begin{aligned}
\left\{	\begin{array}{ll}
x_{i}^{l}+x_{i}^{f}+x_{i}^{c}=1,	\\
x_{i}^{l}, x_{i}^{f}, x_{i}^{c} \in\{0,1\}.
\end{array}	\right.
\end{aligned}
\label{eq:offloading_constraints}
\end{equation}

From (\ref{eq:offloading_at_fog_node_constraints}), (\ref{eq:offloading_at_cloud_constraints}) and (\ref{eq:offloading_constraints}), we derive the following constraints:
\begin{equation}
\begin{aligned}
\left\{	\begin{array}{ll}
x_{i}^{l}+x_{i}^{f}+x_{i}^{c}= x_{i}^{l}+\sum_{j=1}^{M}x_{ij}^{f}+\sum_{j=1}^{M}x_{ij}^{c}=1,	\\
x_{i}^{l}, x_{i}^{f}, x_{i}^{c} \in\{0,1\},	\\
x_{ij}^{f}, x_{ij}^{c} \in\{0,1\}, \forall (i,j)\in \mathbb{N}\times\mathbb{M}.
\end{array}	\right.
\end{aligned}
\label{eq:all_offloading_constraints}
\end{equation}

%

In this paper, we address the joint offloading decision $\{x_{i}\}$ and resource allocation $\{r_{i}\}$ problem in which the objective is to minimize the total energy consumption of all mobile devices and all delay constraints must be satisfied, i.e.,
\begin{equation}
(\mathbf{P}_1) \phantom{10}	 \underset{\{x_{i}\},\{r_{i}\}}{min}\sum_{i=1}^{N}E_{i} ,
\label{eq:global_energy}
\end{equation}
s.t.
\begin{equation}
\begin{aligned}
\left\{	\begin{array}{ll}
x_{i}^{l}+\sum_{j=1}^{M}x_{ij}^{f}+\sum_{j=1}^{M}x_{ij}^{c}=1,	\\
x_{i}^{l},x_{ij}^{f}, x_{ij}^{c} \in\{0,1\}, \forall (i,j)\in \mathbb{N}\times\mathbb{M},		\\
T_{i} \leq T_{i}^{r}, \forall i\in \mathbb{N},	\\
\sum_{i=1}^{N}f_{ij}^{f} \leq F_{j}^{f}, \forall j\in \mathbb{M},	\\
\sum_{i=1}^{N}r_{ij}^{u} \leq R_{j}^{u}, \forall j\in \mathbb{M},	\\
\sum_{i=1}^{N}r_{ij}^{d} \leq R_{j}^{d}, \forall j\in \mathbb{M},	\\
r_{ij}^{u}, r_{ij}^{d}, r_{ij}^{f} \geq 0,\forall (i,j)\in \mathbb{N}\times\mathbb{M}.
\end{array}	\right.
\end{aligned}
\label{eq:resource_and_delay_constraints}
\end{equation}

The optimization problem $(\mathbf{P}_1)$ is an NP-hard. Hence, standard optimization techniques cannot be applied directly and the globally optimal solution is unfeasible. Thus, we introduce two effective approaches to address this problem.

\section{Proposed Optimal Solutions}

\subsection {Relaxing Optimization Solution}
In this section, we introduce a relaxing approach which allows to find the optimal solution through converting binary decision variables, i.e., $\{x_{i}\}$, to real variables. By relaxing binary variables to real numbers, we then can reformulate the optimization problem $(\mathbf{P}_1) $ as follows:
\begin{equation}
(\mathbf{P}_2) \phantom{10} \underset{\{x_{i}\},\{r_{i}\}}{min}\sum_{i=1}^{N}E_{i} ,
\label{eq:global_energy_relax}
\end{equation}
s.t.
\begin{equation}
\begin{aligned}
\left\{	\begin{array}{ll}
x_{i}^{l}+\sum_{j=1}^{M}x_{ij}^{f}+\sum_{j=1}^{M}x_{ij}^{c}=1,	\\
x_{i}^{l},x_{ij}^{f}, x_{ij}^{c} \in[0,1], \forall (i,j)\in \mathbb{N}\times\mathbb{M},		\\
T_{i} \leq T_{i}^{r}, \forall i\in \mathbb{N},	\\
\sum_{i=1}^{N}f_{ij}^{f} \leq F_{j}^{f}, \forall j\in \mathbb{M},	\\
\sum_{i=1}^{N}r_{ij}^{u} \leq R_{j}^{u}, \forall j\in \mathbb{M},	\\
\sum_{i=1}^{N}r_{ij}^{d} \leq R_{j}^{d}, \forall j\in \mathbb{M},	\\
r_{ij}^{u}, r_{ij}^{d}, r_{ij}^{f} \geq 0,\forall (i,j)\in \mathbb{N}\times\mathbb{M},
\end{array}	\right.
\end{aligned}
\label{eq:resource_and_delay_constraints_relax}
\end{equation}

To find the optimal solution for $(\mathbf{P}_2)$, we will prove that the relaxed problem is a convex optimization problem.

\begin{theorem}
\label{theo:convexity}
The relaxed problem $(\mathbf{P}_2)$ is a convex optimization problem.
\end{theorem}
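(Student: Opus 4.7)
The plan is to verify the three ingredients of a convex program for \eqref{eq:global_energy_relax}--\eqref{eq:resource_and_delay_constraints_relax}: convexity of the objective, affineness of every equality constraint, and convexity of every function defining an inequality.

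First I would dispatch the easy structural checks. Substituting the closed forms of $E_i^l$, $E_i^f$, $E_i^c$ into \eqref{eq:energy}, the objective $\sum_i E_i$ collapses to a non-negative linear combination of the offloading variables $x_i^l$, $x_{ij}^f$, $x_{ij}^c$ with constant coefficients $v_i C_i$ and $e_{ij}^u D_i^i + e_{ij}^d D_i^o$; it is therefore affine in $(\{x_i\},\{r_i\})$, hence convex. The selection equality $x_i^l + \sum_j x_{ij}^f + \sum_j x_{ij}^c = 1$ is affine; the relaxed box constraints $x \in [0,1]$, the budgets $\sum_i f_{ij}^f \le F_j^f$, $\sum_i r_{ij}^u \le R_j^u$, $\sum_i r_{ij}^d \le R_j^d$, and the non-negativity of $r,f$ are all affine as well. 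These pieces require no real work.

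The entire difficulty sits in the delay constraints $T_i \le T_i^r$. Unpacking \eqref{eq:delay} together with the expressions for $T_{ij}^f$ and \eqref{eq:cloud_delay}, $T_i$ splits into the affine piece $x_i^l C_i/f_i^l$ and non-negative products of the form $x_{ij}^f D_i^i/r_{ij}^u$, $x_{ij}^f D_i^o/r_{ij}^d$, $x_{ij}^f C_i/f_{ij}^f$, plus the analogous cloud products in $x_{ij}^c$. \textbf{The central obstacle} is that the elementary map $(x,r)\mapsto x/r$ on $\mathbb{R}_{\ge 0}\times \mathbb{R}_{>0}$ is \emph{not} jointly convex: a direct Hessian check yields determinant $-1/r^4<0$, so a naive term-by-term argument does not close, and the convexity of $T_i$ cannot be read off the expression as written.

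My plan for this step is to pass each offending product through the perspective of the convex function $r\mapsto 1/r$, whose perspective $(x,r)\mapsto x^2/r$ \emph{is} jointly convex on the positive orthant and extends continuously at $x=0$. Concretely I would either (i) introduce epigraph variables $t_{ij}^u \ge x_{ij}^f D_i^i/r_{ij}^u$ recast as the second-order-cone inequality $r_{ij}^u t_{ij}^u \ge x_{ij}^f D_i^i$ with $r_{ij}^u, t_{ij}^u \ge 0$, so that every delay contribution becomes a linear function of the epigraph variables and the feasible set stays convex; or (ii) perform a joint change of variables absorbing the $x_{ij}^f r_{ij}^u$ coupling into a new allocation variable. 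The step I expect to be hardest is reconciling such a reformulation with the budget constraints $\sum_i r_{ij}^u \le R_j^u$, since any substitution that convexifies the delay terms must leave the budgets as convex inequalities in the same variables. Once this reconciliation is in place, each $T_i$ is a convex function of the decision variables, its $T_i^r$-sublevel set is convex, and combined with the structural checks above, $(\mathbf{P}_2)$ meets the definition of a convex program.
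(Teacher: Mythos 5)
Your structural checks on the objective and the non-delay constraints are fine and coincide with the paper's. Your diagnosis of the delay constraints is also correct, and in fact sharper than the paper's own argument: the paper asserts that $T_i$, being a sum of linear and linear-fractional terms with positive coefficients, is ``concave,'' but $x/r$ is only quasilinear (your Hessian computation, determinant $-1/r^4$, shows it is neither convex nor concave), a sum of such terms is in general neither, and even a concave $T_i$ would not make the sublevel set $\{T_i \le T_i^r\}$ convex. So you have correctly located the step at which the published proof does not close.

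The gap is in your repair. The epigraph recast $r_{ij}^u t_{ij}^u \ge x_{ij}^f D_i^i$ is not a second-order-cone (or otherwise convex) constraint: a rotated cone requires the product $rt$ to dominate a \emph{square}, and with a right-hand side linear in $x$ the set fails convexity --- e.g.\ with $D_i^i=1$ the points $(x,r,t)=(1,1,1)$ and $(0,0,0)$ are feasible while their midpoint $\left(\tfrac{1}{2},\tfrac{1}{2},\tfrac{1}{2}\right)$ gives $rt=\tfrac{1}{4}<\tfrac{1}{2}=xD_i^i$. The perspective substitution $x^2/r$ does yield a jointly convex term, but it coincides with $x/r$ only at $x\in\{0,1\}$; on $x\in[0,1]$ it is a strictly weaker constraint, so what you would prove is convexity of a \emph{modified} relaxation, not of $(\mathbf{P}_2)$ as written in \eqref{eq:resource_and_delay_constraints_relax}. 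Either route must therefore be presented as a reformulation (with an argument that it agrees with the original at binary points and remains a valid bound for branch-and-bound), not as a proof of Theorem~\ref{theo:convexity} in its current form; as it stands, neither your argument nor the paper's establishes the theorem as literally stated.
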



\begin{proof} From (\ref{eq:energy}), the energy consumption of task $i$, $E_i$, is a linear function of decision variable $x_i$. Consequently, the objective function $\sum_{1}^{N}E_{i}$ is a linear function with respect to real decision variables $\{x_{i}\}$. From (\ref{eq:delay}), the delay $T_{i}$ is the sum of linear and linear-fractional functions: $x_{i}^{l}$, $\frac{x_{ij}^{f}}{r_{ij}^{u}}$, $\frac{x_{ij}^{f}}{r_{ij}^{d}}$, $\frac{x_{ij}^{f}}{f_{ij}^{f}}$, $x_{ij}^{c}$, $\frac{x_{ij}^{c}}{r_{ij}^{u}}$ and $\frac{x_{ij}^{c}}{r_{ij}^{d}}$ for all $j$ in $\mathbb{M}$, These functions have positive coefficients: $C_i$, $D_i^i$, $D_i^o$, $\left(\frac{D_i^i+D_i^o}{f^{fc}}+\frac{C_i}{f^c}\right)$, $D_i^i$ and $D_i^o$, respectively. Thus, $T_{i}(x_i, r_i)$ is a concave function with respect to $x_i$ and $r_i$~\cite{Boyd2004Convex}. Since the objective function in (\ref{eq:global_energy_relax}) is a linear function, and the constraints in (\ref{eq:resource_and_delay_constraints_relax}) are concave functions, the relaxed problem is a convex optimization problem~\cite{Boyd2004Convex}.
\end{proof}

To solve the relaxed optimization problem $(\mathbf{P}_2)$, we can apply some effective tools as mentioned in~\cite{Boyd2004Convex}. In this paper, we adopt the \emph{interior-point method}~\cite{Boyd2004Convex} to find the optimal solution because this is a very effective tool to address the convex optimization problem with constraints.

\subsection{Improved Branch and Bound Algorithm}

Although the relaxing approach can address the join offloading and resource allocation $(\mathbf{P}_1)$, its obtained optimal decision variables are real numbers which are impractical to implement in MEC networks. Furthermore, the relaxing approach cannot utilize the advantage of binary variables in reducing the complexity and finding the optimal solution. In particular, binary variables have only two variables, i.e., either $0$ or $1$. In addition, when the value of a variable is zero, its product will be zero, which allows to reduce the computational complexity significantly. Thus, we introduce an improved branch and bound algorithm, namely IBBA, which allows not only addressing the MINLP, but also utilizing the characteristics of binary variables to reduce the complexity of optimization problem $(\mathbf{P}_1)$. 

In this paper, we exploit the following properties of the optimization problem~$(\mathbf{P}_1)$ to propose the IBBA.

\begin{itemize}
\item \label{Branching_Task} \textbf{Branching task} dictates that a task can be executed at only one place, i.e, at the mobile device, one of edge nodes, or the cloud server via an MEC node. Thus, for the offloading decisions $x_i$ there is only one variable that is equal to $1$, and all others are equal to $0$. Thus, at a node in the IBBA tree, we choose to branch the decisions of a task, forming a \textit{(2M+1)}-tree with height $N$. 
\item \label{Simplifying problem} \textbf{Simplifying problem} dictates that when a task is executed at mobile device, an edge node, or the cloud server via an edge node, all other MEC nodes do not need to allocate resources toward that task. Thus, when $x_{ij}^f = 0$ or $x_{ij}^c = 0$, we can eliminate all sub-expressions of the forms $x_{ij}^fA$ and $x_{ij}^cB$, these decision variables, and related resource allocation variables $f_{ij}^f$, $r_{ij}^u$ and $r_{ij}^d$ in~$(\mathbf{P}_1)$. Consequently, we have sub-problems with the reduced number of variables.
\item \label{Preserving Convexity} \textbf{Preserving convexity} dictates that after fixing some binary variables, sub-problems are convex optimization problems. In particular, based on Theorem~\ref{theo:convexity}, it can be observed that if we fix one or multiple binary variables in~$(\mathbf{P}_1)$ and set all other variable to be real variables, the corresponding relaxed sub-problems are always convex. 
\end{itemize}

Based on three aforementioned properties, we introduce Algorithm~\ref{IBBA_Algorithm_code}. This algorithm not only allows to find the optimal solution for the optimization problem~$(\mathbf{P}_1)$ faster, but also provides optimal binary offloading decision variables which can be efficiently implemented in MEC networks in practice. 

\begin{algorithm}
	\DontPrintSemicolon
	\SetKwInput{Left}{left}\SetKwData{This}{this}\SetKwData{Up}{up}
	\SetKwInOut{Input}{Input}\SetKwInOut{Output}{Output}
	\Input{Set of tasks $\{I_{i}\left(D_{i}^{i},D_{i}^{o},C_{i},T_{i}^{r}\right)\}$\\Set of MEC nodes $\{Node_j(R_{j}^{u},R_{j}^{d},F_{j}^{f})\}$\\Cloud server $r^{fc}$, $f^c$}
	\Output{Optimal variables of problem $(\mathbf{P}_1)$ }
	\BlankLine
	\Begin{
		$Solution \gets \emptyset$; $optVal \gets +\infty$\;
		$Stack.empty()$; $Stack.push$($(\mathbf{P}_1)$)\;
		\While{$Stack.isNotEmpty()$}{
			$curProb \gets Stack.pop()$\;
			$tempSol, tempVal \gets$ \underline{Solve} relaxing problem of $curProb$\;
			\If{$tempVal > optVal$ {\bf or} $curProb$ \textnormal{is infeasible}}{
				\underline{Prune} $curProb$\;
			}
			\If{$tempVal < optVal$}{
				\If{$tempSol$ \textnormal{satisfies all integer constraints of} $\{x_i\}$}{
					$Solution \gets tempSol$\;
					$optVal \gets tempVal$\;
					\underline{Prune} $curProb$\;
				}\Else{
					$subProblems \gets$ \underline{Branch} $curProb$ by fixing the decisions of the first task in the set $\{I_i\}$, which is not fixed so far, based on \textbf{Branching task} property.\;
					\For{{\bf{each}} $subProb$ {\bf in} $subProblems$}{
						\underline{Simplify} $subProb$ based on \textbf{Simplifying problem} property.\;
						$Stack.push(subPob)$\;
					}
				}
			}
		}
		\textbf{Return} $Solution$ and $optVal$\;
	}
	\caption{IBBA Algorithm\label{IBBA_Algorithm_code}}
\end{algorithm}

\subsection{Offloading Analysis}
Before conducting experiments, we analyze when mobile users can benefit from offloading. A mobile user is said to be benefit from offloading if its total energy consumption when the task is offloaded is lower than processing locally. When the task is processed at the mobile device, the consumed energy depends on the required CPU cycles for the task. However, if the task is offloaded, the consumed energy at the mobile device is for both transferring input data $D_i^i$ to and receiving output data $D_i^o$ from an MEC node, thus the energy depends only on the input and output data sizes. Thus, for the task $i$, offloading will benefit if $E_i^l > E_i^f$. While $E_i^l$ is a function of required CPU cycles, $E_i^f$ is a function of input/output data sizes. Therefore, we introduce parameter $\alpha$ as ratio between the number of required CPU cycles and input data size in order to quantify the likelihood of offloading tasks. Let $\alpha_i^*$ be the task complexity ratio at which $E_i^l = E_i^f$. We have:
\begin{equation}
\alpha_i^* = \frac{e_{ij}^uD_i^i + e_{ij}^dD_i^o}{v_i D_i^i}.
\label{eq:complexity_rate}
\end{equation}

Let $\alpha_i$ be the ratio between the number of required CPU cycles $C_i$ and input data size $D_i^i$. We have $C_i = \alpha_i \times D_i^i$. Thus, task $i$ is likely to be offloaded if $E_i^l > E_i^f$ or $\alpha_i > \alpha_i^*$. This parameter is especially important in evaluating offloaded tasks as well as analyzing the performance of whole system. 

\section{Performance Evaluation} 

\subsection{Experiment Setup} 
We use the configuration of a Nokia N900 mobile device described in \cite{Miettinen2010Energy}, and set the number of devices as $N = 10$. Each mobile device has CPU rate $f_i^l=0.5$ Giga cycles/s and the unit processing energy consumption $v_i=\frac{1000}{730}$ J/Giga cycle. We denote $U(a,b)$ as discrete uniform distribution between $a$ and $b$. Here, we assume that each device has a task with the input and output data sizes following uniform distributions $U(10,20)$MB and $U(1,2)$MB, respectively. We also assume that each task has required $C_i$ CPU processing cycles defined by $\alpha_i \times D_i^i$ Giga cycles, in which the parameter $\alpha_i$ Giga cycles/MB is the complexity ratio of the task. 
All parameters are given in Table~\ref{tab:Experimental-parameters}.

\begin{table}[htbp]
	\caption{Experimental parameters\label{tab:Experimental-parameters}}
	\centering{}%
	\begin{tabular}{|l|c|}
		\hline 
		\textbf{Parameters} & \textbf{Value}\tabularnewline
		\hline 
		Number of mobile devices $N$ & 10\tabularnewline
		\hline 
		Number of MEC nodes $M$ & 4\tabularnewline
		\hline 
		CPU rate of mobile devices $f_i^l$ & $0.5$ Giga cycles/s\tabularnewline
		\hline
		Processing energy consumption rate $v_i$ & $\frac{1000}{730}$ J/Giga cycles\tabularnewline
		\hline  
		Input data size $D_i^i$ & $U(10,20)$ MB\tabularnewline
		\hline
		Output data size $D_i^o$ & $U(1,2)$ MB\tabularnewline
		\hline
		Required CPU cycles $C_i$ & $\alpha_i \times D_i^i$ \tabularnewline
		\hline
		Unit transmission energy consumption $e_{ij}^u$ & $0.142$  J/Mb\tabularnewline
		\hline
		Unit receiving energy consumption $e_{ij}^d$ & $0.142$  J/Mb\tabularnewline
		\hline 
		Delay requirement $T_i^r$  & $[30,60]$s\tabularnewline
		\hline		
		Processing rate of each MEC node $F_j^f$ & $10$ Giga cycles/s\tabularnewline
		\hline
		Uplink data rate of each MEC node $R_i^u$ & 72 Mbps\tabularnewline
		\hline 
		Downlink data rate of each MEC node $R_i^d$ & 72 Mbps\tabularnewline
		\hline  
		CPU rate of the cloud server $f^c$ & $10$ Giga cycles/s\tabularnewline
		\hline 
		Data rate between FNs and the cloud $r^{fc}$ & $5$ Mbps\tabularnewline
		\hline  
	\end{tabular}
\end{table}
Here, we refer the policy in which all tasks are processed locally as ``Without Offloading" (WOP), and the policy in which all tasks are offloaded to the MEC nodes or the cloud server as the ``All Offloading" (AOP). The results obtained by Algorithm~\ref{IBBA_Algorithm_code} (IBBA) will be compared with the relaxing optimization policy (ROP), WOP, and AOP.

\subsection{Numerical Results}

\subsubsection{Scenario 1 - Vary the Complexity of Tasks} 

In this scenario, we investigate the effect of task complexity on the offloading decisions and energy consumption of mobile devices by varying the complexity of all tasks. At first, we choose the complexity ratio of tasks $\alpha_i$ as $U(200, 500)$ cycles/byte, then increase each task $100$ cycles/byte for each experiment. The delay requirement is set at $40$s for all tasks. Other parameters are set as in Table~\ref{tab:Experimental-parameters}. 

\begin{figure}[htb!]
	\centering
	\includegraphics[scale=0.55]{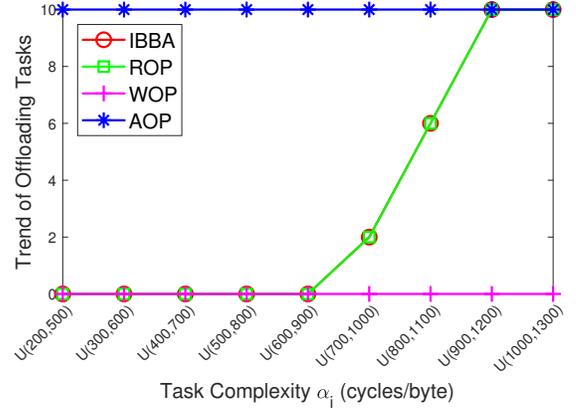}
	\caption{Trend of offloading as the task complexity $\alpha_i$ increases.}
	\label{fig:offloading_vary_complexity}
\end{figure}

Fig.~\ref{fig:offloading_vary_complexity} depicts the trend of offloading tasks when the task complexity ratio $\alpha_i$ is increased. While the trend of offloading tasks of the WOP and AOP are constants, i.e., $0$ and $10$, respectively, the offloading trends of both IBBA and ROP go up as $\alpha_i$ increases. Specifically, the numbers of offloaded tasks of the IBBA and ROP are equal $0$ as the complexity ratio increases from $U(200, 500)$ to $U(600,900)$. This is because $\alpha_i$ is less than $\alpha_i^*$, which is equal to $911$ cycles/byte according to Eq.~(\ref{eq:complexity_rate}) and parameters in Table \ref{tab:Experimental-parameters}. Moreover, all tasks executed locally still can satisfy the delay constraints ($T_i^r = 40$s). Then, the numbers of offloading tasks increase dramatically from $0$ to $10$ since the task complexity ratio $\alpha_i$ increases from $U(600,900)$ to $U(1000,1300)$. This is because there is an increasing number of tasks with $\alpha_i>\alpha_i^*$. Noticeably, Fig.~\ref{fig:offloading_vary_complexity} also shows that all tasks get benefit from offloading when $\alpha_i$ is in the ranges from $U(900,1200)$ to $U(1000,1300)$.

Fig.~\ref{fig:energy_vary_complexity} shows the average energy consumption of mobile devices for IBBA, ROP, WOP and AOP, when $\alpha_i$ increases from $U(200,500)$ to $U(1000,1300)$. Generally, while the average energy consumption is a constant ($18.4$J/task) for the AOP, it increases for other methods. This is because in the AOP, all tasks are offloaded and the consumed energy at mobile devices depends only on the data sizes of $D_i^i$ and $D_i^o$. For the WOP, the consumed energy increases linearly according to the task complexity ratio. Similar to Fig.~\ref{fig:offloading_vary_complexity}, the energy consumption trends of the IBBA and ROP are the same because their offloading decisions are impacted by the energy efficiency factor without constraints.

\begin{figure}[htb!]
	\centering
	\includegraphics[scale=0.55]{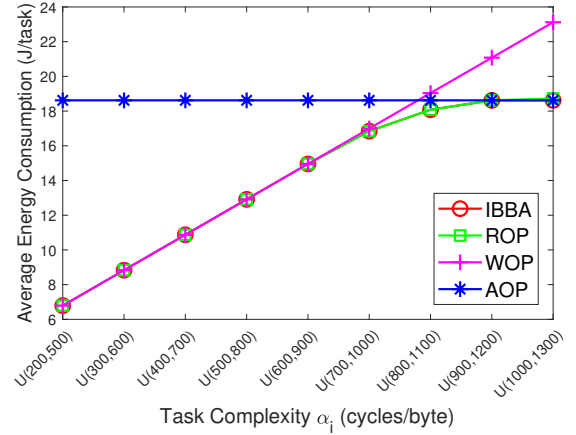}
	\caption{Average energy consumption of mobile devices as the task complexity $\alpha_i$ is increased.}
	\label{fig:energy_vary_complexity}
\end{figure}

\begin{figure}[htb!]
	\centering
	\includegraphics[scale=0.55]{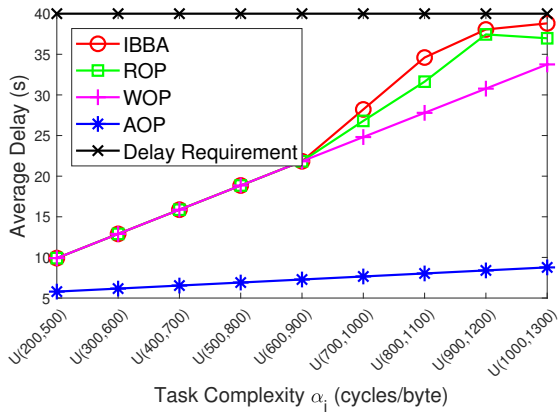}
	\caption{Average task processing delay as the task complexity $\alpha_i$ is increased.}
	\label{fig:delay_vary_complextiy}
\end{figure}

Fig.~\ref{fig:delay_vary_complextiy} shows the average delay as the task complexity is increased. Generally, the average delay increases for all policies. Remarkably, the average delays of the IBBA and ROP are always lower than the delay requirement $T_i^r=40$s. From the average delay of the AOP policy, we can observe that the offloading computation can support all tasks with less than $10$s of the delay requirement $T_i^r$.
Moreover, the AOP policy also gains the lowest average delay due to the low average complexity and data size of the tasks.

\subsubsection{Scenario 2 - Vary the Task Delay Requirements} 
In this scenario, we study the impact of task delay requirements on the energy consumption and offloading decisions of mobile devices. We keep the settings as in Table \ref{tab:Experimental-parameters}, and select a set of the tasks with complexity $\alpha_i$ following $U(800, 1100)$ from Scenario 1. Specifically, there are $6$ tasks receiving benefits from offloading due to $\alpha_i > \alpha_i^*=911$ cycles/byte. We then change input/output data sizes of one task so that even it does not get benefit from offloading, but its local processing delay $T_i^l$ is greater than $60$s. The delay requirement $T_i^r$ for all tasks increases from $30$s to $60$s.

\begin{figure}[htb!]
	\centering
	\includegraphics[scale=0.55]{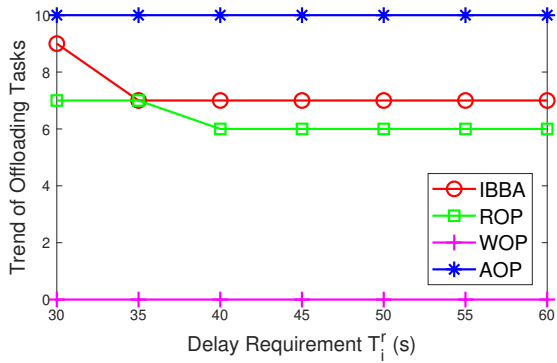}
	\caption{Trend of offloading as the delay requirement is looser.}
	\label{fig:offloading_vary_delay_requirement}
\end{figure}

Figs.~\ref{fig:offloading_vary_delay_requirement}~and~\ref{fig:energy_vary_delay_requirement} illustrate the trend of offloading tasks and average energy consumption, respectively. As observed in Fig.~\ref{fig:offloading_vary_delay_requirement}, at $T_i^r=30$s, while IBBA has $9$ offloaded tasks, ROP has only $7$ offloaded tasks. As mentioned before, there are only $6$ beneficial tasks from offloading, and IBBA algorithm always returns the optimal integer solution. Thus, $9$ offloaded tasks in IBBA including $6$ tasks which get benefits from offloading and $3$ tasks with the local processing delay $T_i^l$ greater than $T_i^r$. Similarly, while the IBBA maintains $7$ offloaded tasks including $6$ beneficial tasks and a task with local delay $T_i^l>60$s as $T_i^r$ increases from $40$s to $60$s, ROP offloads only $6$ beneficial tasks. Consequently, in Fig.~\ref{fig:energy_vary_delay_requirement}, the consumed energy of ROP is always lower than IBBA, the actual MINLP solution. The ROP has to pay for this by having a proportion of tasks that will not satisfy the constraints. In summary, in both IBBA and ROP, when the delay requirements are looser, tasks without benefit from offloading, tend to be processed locally aiming at reducing the consumed energy.

\begin{figure}[htb!]
	\centering
	\includegraphics[scale=0.55]{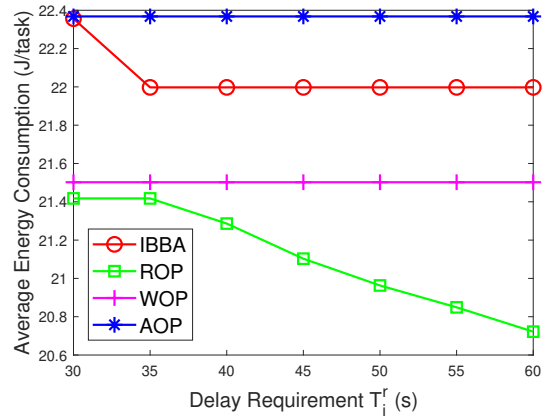}
	\caption{Average consumed energy at mobile devices when the delay requirement is looser.}
	\label{fig:energy_vary_delay_requirement}
\end{figure}

\section{Summary} 

In this paper, we study the offloading problem for the cooperative mobile edge computing network in which mobile edge nodes cooperate to perform computation requirements of the mobile users. To minimize the total energy consumption and meet all delay requirements of mobile users, we formulate the joint offloading decision and resource allocation optimization problem, and propose two effective methods, i.e., IBBA based on the Branch and bound method and ROP based on the interior point method, to find the optimal solution for both the mobile users and edge nodes. The numerical results then verify the efficiency of the proposed solutions. 

\bibliographystyle{IEEE}

\end{document}